\newtheorem{theorem}{Theorem}[]
\newtheorem{remark}{Remark}[]
\newtheorem{model}{Model}[]
\renewcommand{\section}{
	\@startsection
	{section}
	{1}
	{0pt}
	{1.1\baselineskip}
	{0.2\baselineskip}
	{\sc \centering}
}
\renewcommand{\subsection}{
	\@startsection
	{subsection}
	{1}
	{0pt}
	{1.1\baselineskip}
	{0.2\baselineskip}
	{\sc \centering}
}
\renewcommand{\subsubsection}{
	\@startsection
	{subsubsection}
	{1}
	{0pt}
	{1.1\baselineskip}
	{0.2\baselineskip}
	{\sc \centering}
}
\begin{document}
	
\title{\large\sc Loan portfolio management and Liquidity Risk: The impact of limited liability and haircut}
\normalsize
\author{
\sc{Deb Narayan Barik} \thanks{Department of Mathematics, Indian Institute of Technology Guwahati, Guwahati-781039, India, e-mail: d.narayan@iitg.ac.in}
\and 
\sc{Siddhartha P. Chakrabarty} \thanks{Department of Mathematics, Indian Institute of Technology Guwahati, Guwahati-781039, India, e-mail: pratim@iitg.ac.in, Phone: +91-361-2582606}}

\date{}
\maketitle
\begin{abstract}

In this article, we consider the problem of a bank's loan portfolio in the context of liquidity risk, while allowing for the limited liability protection enjoyed by the bank. Accordingly, we construct a novel loan portfolio model with limited liability, while maintaining a threshold level of haircut in the portfolio. For the constructed three-time step loan portfolio, at the initial time, the bank raises capital via debt and equity, investing the same in several classes of loans, while at the final time, the bank either meets its liabilities or becomes insolvent. At the intermediate time step, a fraction of the deposits are withdrawn, resulting in liquidation of some of the bank's assets. The liquidated portfolio is designed with the goal of minimizing the liquidation cost. Our theoretical results show that model with the haircut constraint leads to lesser liquidity risk, as compared to the scenario of no haircut constraint being imposed. Finally, we present numerical results to illustrate the theoretical results which were obtained.

{\it Keywords: Optimization; Liquidity Risk; Limited Liability; Haircut}

\end{abstract}

\section{Introduction}

The Basel regulations, as a part of its capital framework for the international banking system, includes the Leverage Ratio as a key component of its capital requirement structure \cite{Blum2008}. The genesis of the work carried out in \cite{Blum2008} was to effectively analyze, the apparent contradiction between Leverage Ratio based capital requirement and the view of the banks that Leverage Ratio is incompatible with the modern risk management practices. The work concludes that the Leverage Ratio restrictions enable supervisors elicit honest reporting by the banks, especially when they (supervisors) have limited punitive mechanism (for errant banks) at their disposal. Hulster \cite{D2009} presents an analysis of the benefits of adopting Leverage Ratio as a part of regulatory framework, while identifying some of its inherent limitations. A formulation to compute Leverage Ratio was given by 
$\text{Leverage Ratio}=\frac{\text{Tier 1 Capital}}{\text{Adjusted Assets}}$. The benefits of using Leverage Ratio includes counter-cyclical measures, lesser regulatory arbitrage and simplicity in terms of deployment and monitoring, while undesirable incentives and being limited to balance sheet are a couple of shortcomings which were observed. An interesting observation made in \cite{Acosta2020} was that, while Leverage Ratio can act as an inducement for banks to take risk, this can be offset by the safety net of higher capital. In particular, for highly leveraged banks, the inclusion of the Leverage Ratio requirement leads to significant reduction in the levels of distress probability. A contextual description of the financial fragility resulting from being highly leveraged and the consequent inclusion and benefits of Leverage ratio is provided for in \cite{Hildebrand2008}. In \cite{Dellariccia2014}, the question of impact of interest rates on banks' leveraging tendencies is discussed. It is shown that the extent of leverage, resulting from interest rate changes is contingent on whether the capital structure is adjustable or fixed. The impact of the non-risk based Leverage capital requirement on credit practices of banks and their consequent stability is studied in \cite{Kiema2014}. The key finding reported in this work is that Leverage Ratio can potentially act as an incentive for banks to acquire highly risky credit portfolios and the authors recommend the enhancement of the current Leverage Ratio requirements.

One of the key characteristics of credit assets held by a bank, namely, it being considered as illiquid, in spite of existence of credit derivatives, which enables transfers as well as sale of loan portfolios is examined in \cite{Wagner2007}. While this practice has been welcomed by the regulatory authorities, it does not account for the possibility of greater liquidity acting as an inducement for banks to take on newer risks, and paradoxically this increases the instability and the likelihood of bank failures \cite{Wagner2007}. Another paradoxical observation made in \cite{Khan2017} finds evidence of higher risk-taking tendencies of banks having lower funding liquidity (resulting from higher deposit rates), especially during stressed market conditions. The authors of \cite{Acharya2012} carry out an evaluation of how access to abundant levels of liquidity can lead to the creation of asset bubbles, as was the case during the 2008 crisis. This scenario of banks holding on to high levels of liquidity heightens macroeconomic risk and acts as the genesis for crisis in the financial sector. Ghenimi et al. \cite{Ghenimi2017} examine the relationship between credit risk and liquidity risk and observe that (contrary to what one would expect) both these factors do not exhibit reciprocal behaviour. Further, there is no time-delayed association, even though they both contribute to the fragility of banks and their consequent instability. The global financial crisis and the consequent response of imposition of liquidity adequacy, as a part of the risk management practices was intended to ensure the stability of the banking sector \cite{Van2017}.

The safety net for banks, by way of limited liability has the potential for the banks engaging in acquiring more risky portfolios, along with concurrent undercapitalization \cite{Sinn2001}. This leads to the imposition of solvency constraints in the banking system, which, in the context of globalization of the banking system leads to a bulk of the cost of banking being borne by the domestic customer base of the banks. Limited liability, from a historical narrative, along with identification of its advantages and disadvantages are presented in \cite{Carney1998}. Limited liability offers the advantage of promoting and encouraging investments by passive investors, while extending immunity to the managers of such investments. In contrast, the concept of limited liability may encourage indulgence in risky activities, stemming from poor monitoring (by shareholders) of the risky activities being engaged in, by the firm. The concept of limited liability, particularly its impact on the banking sector, in terms of bailouts, is discussed in \cite{Cordella2003}. The paper highlights on the dilemma of the role of the central bank as the lender of last resort (LOLR), when balancing between being strongly punitive on the errant banks versus the risk of systemic impact, resulting from the failure of the banks (who enjoy immunity, by virtue of their limited liability status). The authors argue that the central bank should make funds available in situations of stressed macroeconomic conditions, as opposed to the adverse situations, resulting from imprudent portfolio decisions by the banks, and that interventions by the central banks should be contingent on these conditions.

From a mathematical perspective, optimization techniques play a vital role in the process of determination of economic factors in various paradigms. In case of Leverage Ratio \cite{Acosta2020}, the authors consider a constrained decision problem, which maximizes the expected profits (adjusted for payouts, survival and investment costs) and subject to risk-adjusted capital, as well as a Leverage Ratio constraint. In \cite{Acharya2012}, the authors consider the optimization setup for a bank owner’s problem of maximizing the expected profits, excluding any penalty resulting from liquidity crunch, subject to constraints resulting from the action of depositors. From the perspective of risk-return trade-off in case of a portfolio of loans, the author in \cite{Mencia2012} considers a risk minimization problem, akin to the classical Markowitz framework, which not only includes the expected return and sum of weights constraint, but also a probabilistic constraint, which ensures that banks achieve a return level above a threshold, chosen in a manner so as to ensure that the minimum capital requirement is met by the bank.

\section{Motivation}

In our earlier work \cite{Barik2022}, we had discussed at length, the aspects of management of a loan portfolio, in the context of incorporating limited liability. It is but natural, now to bring forth the consideration of liquidity, the relevance and importance of which can be gauged from the recent events, such as the collapse of the Silicon Valley Bank. To encapsulate the dynamics of liquidity consideration, in the existent framework, we begin with the enumeration of the necessary (and important) components towards achieving this end, that is, incorporating liquidity in the model.

The incorporation of liquidity entails the construction of ``at least'' a three (time) setup model ($t=0,1,2$), since the consideration of two time steps would not create a situation of liquidity crunch, due to the absence of the likelihood of deposit withdrawal at any intermediate time point of the loan duration (of two time points). Having considered the (aforesaid) simplest model setup, it is assumed that any liquidity requirement emerges as the result of a fraction of deposits being withdrawn at the intermediate time point $t=1$ (which for now is assumed to be known, but may be generalized to being generated through a random process, such as the exponential distribution). As a result of withdrawal demand of a fraction of the deposits, the bank is likely to face the necessity of liquidating a commensurate fraction of the loan assets (or the loan portfolio). Accordingly, in this study, the goal is to develop (and of course, analyze) a two-step approach to optimize the expected return on the loan portfolio, followed by an optimal liquidation strategy. The timeline of the events on a time scale of $t=0,1,2$ is illustrated below in Table \ref{Tab:2.0}.
\begin{table}[h]
\begin{center}
\begin{tabular}{ccc}
\hline
$t=0$ & $t=1$ & $t=2$ \\
\hline\hline
Bank collects money & A fraction of depositors &  All the risky investments \\
through debt ($d$) and &  withdraw their deposits & have matured. Hence the \\
equity ($e$) and then invests in &  and consequently the bank has  &  bank either pays its liabilities \\
safe assets and risky assets. & to liquidate some assets. & or faces insolvency/bankruptcy.\\
\hline
\end{tabular}
\caption{Timeline of events for the bank's portfolio}
\label{Tab:2.0}
\end{center}
\end{table}

\section{Model Description}

The deposit structure of the bank is assumed to follow the classical firm value assumption (due to Merton \cite{Merton74}), where the banks' assets are considered equal to the sum of the money raised through equity, $e$ (from shareholders) and the money raised through debt, $d$ (from depositors). Now, this amount is assumed to be invested in three types of loans, namely, a safe loan $L_{0}$ and two risky loans $L_{1}$ and $L_{2}$ (where $L_{1}$ is less risky than $L_{2}$). Accordingly, we have the relation:
\[e+d=\sum\limits_{i=0}^{2}L_{i}.\]

Now, coming to the quality of liquidity, one encounters the notion of high and low liquidity, corresponding to easy and hard to sell assets, respectively. It is obvious that more liquid assets have less haircut (a concept indicating the extent of loss or price impact, as a direct result of the unwinding the assets at a certain pace) and vice-versa (less liquid assets have to take more haircut). 

Accordingly, we let the constant $\gamma_{i}$ denote the haircut for loan $L_{i}$ ($i=0,1,2$). It is obvious that the more risky the asset is, the more haircut it will experience during the liquidation. Therefore,
\[\gamma_{0}<\gamma_{1}<\gamma_{2}.\]
Further, at the time $t=1$, when the liquidity requirement emerges, we assume that a certain factor $\alpha_{w}$ (of $d$) is withdrawn and another fraction $\alpha_{d}$ (of $d$) is deposited. This leads to the following observations:
\begin{enumerate}[(1)]
\item If $\alpha_{d}\ge \alpha_{w}$, then the new deposits are enough to meet the withdrawal demand of the depositors and therefore there is no need to liquidate any part of the loan portfolio.
\item If $\alpha_{d}<\alpha_{w}$ then the new deposits prove to be insufficient to meet the obligations to the depositors demanding withdrawal, which will trigger the process of liquidation from the loan portfolio. Now, the dilemma faced by the bank is as follows (summarized in Table \ref{Tab:2.1}):
\begin{enumerate}[(A)]
\item If the safe asset ($L_{0}$) is liquidated, then the likelihood of default risk at $t=2$ increases, even though this exercise means less haircut.
\item In contrast, if the riskier loans ($L_{1}$ and $L_{2}$) are liquidated then the likelihood of default risk at $t=2$ decreases, but at the cost of greater haircut.
\begin{table}[h!]
\centering
\begin{tabular}{ccc} 
\hline
Loans & Pros & Cons \\ 
\hline\hline
$L_{0}$ & Lesser haircut & More default risk \\
$L_{1}$ and/or $L_{2}$ & Less default risk & Greater haircut \\
\hline
\end{tabular}
\caption{Pros and cons for the safe asset and the risky assets}
\label{Tab:2.1}
\end{table}
\end{enumerate}
\item The model setup assumes that the equity values remain unchanged at time $t=1$.
\end{enumerate}

Suppose that we liquidate a fraction $\beta_{i}$ for loan $L_{i}$ for $0\le \beta_{i}\le 1$ ($i=0,1,2$). Then we get,
\[\sum\limits_{i=0}^{2}\beta_{i}(1-\gamma_{i})L_{i}=\left(\alpha_{w}-\alpha_{d}\right)d.\]

Before we present the description of the models which are being proposed in this work, we enumerate the various variables for the same in Table \ref{Tab:2.2}. Note that Expected Loss=Probability of Default $\times$ Loss Given Default.

\begin{table}[h!]
\centering
\begin{tabular}{cc}
\hline
Variable & Description \\
\hline\hline
$X$ & Realization of the loan portfolio \\
$\delta$ & Cost of equity \\
$\rho(x)$ & Risk of the loan portfolio $x$  \\
$L$ & The total amount of loss from liquidating the whole position \\
$\theta_{1}$ & Upper bound on risk ($> 0$) \\
$\theta_{2}$ & Lower bound on risk ($> 0$) \\
$\beta_{i}$ & Liquidation strategy for the $i$-th asset at time $t=1$ \\
$\gamma$ & Amount of loss due to liquidation \\
$k_{lev}$ & Leverage Ratio. \\
$K(x)$ & Internal Ratings Based (IRB) capital requirement for portfolio $x$ \\
$e$ & Equity component of the bank's portfolio. \\
$\eta_{i}$ & Loss Given Default (LGD) for the $i$-th loan \\
$EL_{i}$ & Expected Loss (EL) for the $i$-th loan \\
\hline
\end{tabular}
\caption{Description of the model variables}
\label{Tab:2.2}
\end{table}

\begin{model}
\label{Model2.1}	
The goal is to solve the maximization problem:
\[\max_{x,e}\left[\mathbb{E}\left[\max\left(X-(1-e),0\right)\right]-\delta e\right],\]
subject to the constraints of:
\begin{enumerate}[(A)]
\item $0 \leq x_{i} \leq 1~\forall~i=0,1,2$ (short selling is not permissible).
\item $\displaystyle{\sum\limits_{i=0}^{2} x_{i}=1}$.
\item $e \geq \max \left(k_{lev}, K(x)\right)$.
\item $\rho(x) \leq \theta_{1}$ (upper bound on risk).
\item $\displaystyle{\sum\limits_{i=0}^{2} x_{i}\gamma_{i}\leq L}$
(upper bound on the total amount of haircut if the entire portfolio is liquidated). 
\end{enumerate}
\end{model}

\begin{model}
\label{Model2.2}	
Now we construct another model like the preceding one by removing the last constraint, namely, the upper bound on to the total haircut amount, in the event of liquidation of the entire portfolio. Accordingly, the goal is to solve the maximization problem:
\[\max_{x,e}\left[\mathbb{E}\left[\max\left(X-(1-e),0\right)\right]-\delta e\right],\]
subject to the constraints of:
\begin{enumerate}[(A)]
\item $0 \leq x_{i} \leq 1~\forall~i=0,1,2$ (short selling is not permissible).
\item $\displaystyle{\sum\limits_{i=0}^{2} x_{i}=1}$.
\item $e \geq \max \left(k_{lev}, K(x)\right)$.
\item $\rho(x) \leq \theta_{1}$ (upper bound on risk).
\end{enumerate}
\end{model}

\begin{model}
\label{Model2.3}
The formulation of this model deals with designing the strategy of liquidating the problem of the loans portfolio. Let us denote the present value of the $i$-th loan at time $1$ by $X_{i}^{(1)},~i=0,1,2$. Then the goal is to solve the minimization problem:
\[\min_{\beta}\left[\sum\limits_{i=0}^{2}\beta_{i}\gamma_{i}\right],\] 
subject to the constraints of:
\begin{enumerate}[(A)]
\item $0 \leq \beta_{i} \leq x_{i},~i=0,1,2$. 
\item $\sum\limits_{i=0}^{2}\beta_{i}(1-\gamma_{i})X_{i}^{(1)}=\left(\alpha_{w}-\alpha_{d}\right)d$ (payment for the withdrawals by depositors).
\item $\displaystyle{\sum_{i=0}^{2} \beta_{i} \text{EL}_{i} \geq \theta_{2}}$ (lower bound on risk of liquidated portfolio, $\theta_{2}$).
\end{enumerate}
\end{model}
 
\begin{model}
\label{Model2.4}
Next we construct the model without the risk-lower bound. Accordingly, we get the following the minimization problem:
\[\min_{\beta}\left[\sum\limits_{i=0}^{2}\beta_{i}\gamma_{i}\right],\] 
subject to the constraints of:
\begin{enumerate}[(A)]
\item $0 \leq \beta_{i} \leq x_{i},~i=0,1,2$. 
\item $\sum\limits_{i=0}^{2}\beta_{i}(1-\gamma_{i})X_{i}^{(1)}=\left(\alpha_{w}-\alpha_{d}\right)d$ (payment for the withdrawals by depositors).
\end{enumerate}
\end{model} 
The motivation of adding the last constraint in Model \ref{Model2.3} is given as Remark \ref{Remark2.1} as follows.

\begin{remark}
\label{Remark2.1}	
Banks will (naturally) try to minimize the loss from liquidating the loan portfolio. Since the safe asset is the most liquid in our loan portfolio set-up, hence liquidating this asset leaves the riskier assets in the portfolio for the next step ($t=2$). Consequently, in case of any unfavorable conditions, the bank has to face a significant loss, if the riskier loan are not repaid in a timely manner. Keeping this in mind, we include the constraint (C) in Model \ref{Model2.3} to have a cap on liquidating the safe asset.
\end{remark}

\begin{remark}
\label{Remark2.2}
The present value of the $i$-th loan is given by:
\[X_{i}^{(1)}=\left(1+\frac{r_{i}}{2}\right)X_{i}^{(0)},\] 
and the final value of the loan is given by: \[X_{i}^{(2)}=\left(1+r_{i}\right)X_{i}^{(0)},\] 
where $r_{i}$ is the interest rate in the $i$-th loan, for $i=0,1,2$. Obviously, $r_{1}<r_{2}<r_{3}$, for $i=0,1,2$.
\end{remark}

\section{Model Analysis}

\begin{theorem}
\label{Theorem2.1}
The solution of Model \ref{Model2.1} exists.
\end{theorem}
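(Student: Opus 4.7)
The plan is to invoke the Weierstrass extreme value theorem, which requires that the feasible set of Model \ref{Model2.1} be non-empty and compact in a finite-dimensional Euclidean space and that the objective be upper semicontinuous (in fact, I will show continuity). Under the standard balance-sheet normalization $e + d = 1$ implicit in the loss term $X - (1-e) = X - d$ of the objective, the decision vector $(x_{0}, x_{1}, x_{2}, e)$ is confined by constraint (A) and the natural range $e \in [0,1]$ to the compact cube $[0,1]^{3} \times [0,1]$. Existence will therefore follow once I establish that (i) the feasible set is closed and non-empty, and (ii) the objective $(x, e) \mapsto \mathbb{E}[\max(X - (1-e), 0)] - \delta e$ is continuous on it.

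For closedness: constraints (A), (B), and (E) are defined by continuous linear/affine functions, while (C) and (D) are preimages of closed half-lines under the maps $x \mapsto K(x)$ and $x \mapsto \rho(x)$ — continuity of these maps is the standing modeling assumption and is satisfied by the Basel IRB capital formula with fixed per-loan inputs (PD, LGD) and by the usual choices of $\rho$ (variance, CVaR, or VaR under a continuous loss distribution). Non-emptiness is witnessed by the all-safe portfolio $x^{\ast} = (1,0,0)$ paired with $e^{\ast} = \max(k_{lev}, K(x^{\ast}))$, provided the mild compatibility conditions $\rho(x^{\ast}) \leq \theta_{1}$ and $\gamma_{0} \leq L$ hold; these are implicit in any economically sensible choice of the primitives $(\theta_{1}, L, \gamma_{i})$.

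For continuity of the objective, I would appeal to the dominated convergence theorem. Writing $X = \sum_{i=0}^{2} x_{i} X_{i}$, where $X_{i}$ is the random realization of loan $i$, the pathwise integrand $(x, e) \mapsto \max\bigl(\sum_{i} x_{i} X_{i} - (1-e), 0\bigr)$ is jointly continuous in $(x, e)$ and, on the compact feasible set, is dominated by the integrable envelope $\sum_{i=0}^{2} |X_{i}| + 1$ under the standing integrability assumption $\mathbb{E}[|X_{i}|] < \infty$. Dominated convergence then gives continuity of the expectation, and the term $-\delta e$ is trivially continuous, so the full objective is continuous on the compact feasible set and Weierstrass produces a maximizer.

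The main anticipated obstacle is continuity of the IRB capital function $K(x)$ and of the risk measure $\rho(x)$: these are not explicitly postulated in the statement of Model \ref{Model2.1} and must either be taken as modeling assumptions or verified for the specific functional forms used later in the paper. A secondary subtlety is the implicit upper bound $e \leq 1$ coming from balance-sheet normalization; without it, the feasible set is unbounded in the $e$-direction and one would need a separate coercivity argument — for example, that the objective tends to $-\infty$ as $e \to \infty$, which holds whenever $\delta > 1$ — to finish the existence proof.
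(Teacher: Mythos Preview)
Your proposal is correct and follows essentially the same approach as the paper: verify non-emptiness of the feasible set via the all-safe portfolio, argue compactness from closedness and boundedness, and invoke the Weierstrass extreme-value theorem for a continuous objective. If anything, your treatment is more careful than the paper's, which simply asserts continuity of the objective and works in $\mathbb{R}^{3}$ without explicitly handling the equity variable or the dominated-convergence step; the paper also uses $e=1$ (full equity) rather than $e^{\ast}=\max(k_{lev},K(x^{\ast}))$ as its feasibility witness.
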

\begin{proof}
Let $S$ be the region defined by its constraints in the model \ref{Model2.1}. Since the portfolio $x=(1,0,0)$, with equity being $100\%$ is feasible, therefore $S\neq \phi$ (because the haircut for the safe loan is $\gamma_{0}=0$, the expected loss is zero, and $100\%$ equity satisfies all the capital requirement conditions). Further, since $S$ is a closed and bounded set in $\mathbb{R}^{3}$, therefore it is a compact set. The objective function lies in ${C}(\mathbb{R}^{3})$. So Weierstrass Theorem \footnote{If $f(x)$ is continuous on a nonempty feasible set $S$, which is closed and bounded, then $f(x)$ has a global minimum in $S$.} assures that the solution to the problem exists, which we denoted by $\left(x^{*},e_{0}^{*}\right)$, where $x^{*}=\left(x_{0}^{*},x_{1}^{*},x_{2}^{*}\right)$.
\end{proof}

\begin{theorem}
\label{Theorem2.2}
The solution of Model \ref{Model2.2} exists.
\end{theorem}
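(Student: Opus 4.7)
The plan is to mimic the Weierstrass-theorem argument used in the proof of Theorem \ref{Theorem2.1}, adjusted to the relaxed constraint set of Model \ref{Model2.2}. Since Model \ref{Model2.2} is obtained from Model \ref{Model2.1} by removing constraint (E) (the upper bound on the total haircut), every $(x,e)$ that was feasible for Model \ref{Model2.1} remains feasible for Model \ref{Model2.2}. In particular, the portfolio $x=(1,0,0)$ with $e=1$ exhibited in Theorem \ref{Theorem2.1} still satisfies constraints (A)--(D) of Model \ref{Model2.2}, so the new feasible region, call it $S'$, is non-empty.

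Next I would verify that $S'$ is compact in $\mathbb{R}^{3}\times\mathbb{R}$. Boundedness is immediate: constraints (A) and (B) force $x$ to lie on the unit simplex in $[0,1]^3$, and $e$ is bounded above by $1$ (equity as a fraction of the balance sheet) and below by $0$ via (C). For closedness I would observe that (A) defines a closed box, (B) is a linear equality, (C) can be rewritten as $e-\max(k_{lev},K(x))\geq 0$, which is closed whenever $K(\cdot)$ is continuous, and (D) is the preimage $\rho^{-1}((-\infty,\theta_1])$ of a closed set under the continuous risk functional $\rho$. Since a finite intersection of closed sets is closed, $S'$ is closed and therefore compact.

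Finally, the objective $(x,e)\mapsto \mathbb{E}[\max(X-(1-e),0)]-\delta e$ is continuous on $\mathbb{R}^{3}\times\mathbb{R}$ (as already used in Theorem \ref{Theorem2.1}, via dominated convergence and continuity of the positive-part map). Applying the Weierstrass theorem cited in the footnote of Theorem \ref{Theorem2.1} then yields a maximizer $(x^{**},e_0^{**})\in S'$.

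I do not anticipate any real obstacle: the entire argument is essentially a restriction of the Theorem \ref{Theorem2.1} proof to a larger feasible region, and removing a constraint never destroys non-emptiness, continuity of the objective, or boundedness (which here is enforced by the simplex constraints, not by the dropped haircut bound). The only point that warrants care is confirming that dropping constraint (E) does not leave the feasible set unbounded in some coordinate; since $x$ is still confined to the unit simplex and $e$ to $[0,1]$, this is clear, and the proof goes through verbatim.
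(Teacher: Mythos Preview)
Your proposal is correct and follows essentially the same Weierstrass-theorem approach as the paper, which simply states that the proof ``follows on the lines of Theorem \ref{Theorem2.1}.'' If anything, you have been more careful than the paper by explicitly tracking the $e$-coordinate and verifying closedness of each constraint set.
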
	
\begin{proof}
The proof for the existence for the solution of Model \ref{Model2.2} follows on the lines of Theorem \ref{Theorem2.1}.
\end{proof}

\begin{theorem}
\label{Theorem2.3}
The total amount of loss is more sensitive in case of illiquid asset.
\end{theorem}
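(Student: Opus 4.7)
The statement is qualitative, so my first task is to pin down what ``sensitivity of the loss'' should mean. The natural reading, given the structure of Model \ref{Model2.3}, is the partial derivative of the liquidation loss incurred on asset $L_{i}$ with respect to its own haircut $\gamma_{i}$, holding fixed the cash obligation $(\alpha_{w}-\alpha_{d})d$ that must be raised at $t=1$.

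With that interpretation I would isolate a single asset $L_{i}$ and impose the budget constraint in the form $\beta_{i}(1-\gamma_{i})X_{i}^{(1)}=(\alpha_{w}-\alpha_{d})d$, which lets me solve explicitly for $\beta_{i}$ and substitute into the Model \ref{Model2.3} objective contribution $\beta_{i}\gamma_{i}$. The resulting per-asset liquidation loss simplifies to a positive constant (depending on the cash requirement and on $X_{i}^{(1)}$) multiplied by $\gamma_{i}/(1-\gamma_{i})$. A single differentiation then yields a sensitivity proportional to $1/(1-\gamma_{i})^{2}$, which is strictly increasing on $[0,1)$.

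The conclusion follows by invoking the ordering $\gamma_{0}<\gamma_{1}<\gamma_{2}$, imposed in the model to reflect increasing illiquidity: a larger haircut gives a strictly larger derivative, so the liquidation loss is most sensitive for the most illiquid asset. If one prefers a multi-asset statement, the same computation performed coordinate-wise at any feasible $\beta$ shows that $\partial(\beta_{i}\gamma_{i})/\partial\gamma_{i}$ evaluated under the binding cash constraint is again monotone in $\gamma_{i}$.

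The main obstacle, in my view, is not the computation but the formalization of the theorem. One has to decide whether ``loss'' refers to the haircut fraction $\beta_{i}\gamma_{i}$ appearing in the Model \ref{Model2.3} objective, or to the dollar loss $\beta_{i}\gamma_{i}X_{i}^{(1)}$, and one has to be explicit that the cash obligation $(\alpha_{w}-\alpha_{d})d$ is the quantity held fixed while $\gamma_{i}$ is perturbed; otherwise ``sensitivity'' is ambiguous. Both natural conventions for the loss lead to the same qualitative conclusion because $X_{i}^{(1)}$ is independent of $\gamma_{i}$, so once a convention is chosen the derivative argument above closes the proof.
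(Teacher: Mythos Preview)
Your computation is correct under your chosen reading of ``sensitivity,'' but that reading is not the one the paper uses. The paper interprets the sensitivity of the total loss $L=\sum_{i}\beta_{i}\gamma_{i}$ with respect to the \emph{liquidation fraction} $\beta_{i}$, not with respect to the haircut $\gamma_{i}$. Under that reading the proof is a one-liner: $\partial L/\partial\beta_{i}=\gamma_{i}$, and since $\gamma_{0}<\gamma_{1}<\gamma_{2}$ by assumption, increasing the weight on the more illiquid loan raises $L$ at a faster rate. No budget constraint is invoked; nothing is held fixed beyond the other $\beta_{j}$'s.

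Your route is different and, arguably, more economically informative: by holding the cash obligation $(\alpha_{w}-\alpha_{d})d$ fixed you capture the feedback that a larger haircut forces a larger $\beta_{i}$ to raise the same cash. One caveat, though: the ``positive constant'' in your expression is $(\alpha_{w}-\alpha_{d})d/X_{i}^{(1)}$, which depends on $i$ through $X_{i}^{(1)}=(1+r_{i}/2)X_{i}^{(0)}$. So the cross-asset comparison of $1/(1-\gamma_{i})^{2}$ alone does not quite close the argument; you would need a brief remark that the $X_{i}^{(1)}$ factors (or the initial allocations $X_{i}^{(0)}$) do not reverse the ordering. The paper's interpretation sidesteps this entirely at the cost of being a near-triviality.
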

\begin{proof}
Total amount of loss due to liquidation is given by, 
\[L=\beta \gamma^{\top}=\sum\limits_{i=0}^{2}\beta_{i}\gamma_{i},\]
where $\beta=\left(\beta_{0},\beta_{1},\beta_{2}\right)$ and $\gamma=\left(\gamma_{0},\gamma_{1},\gamma_{2}\right))$. Now we see that $\displaystyle{\frac{\partial L}{\partial \beta_{2}} \geq \frac{\partial L}{\partial \beta_{1}} \geq \frac{\partial L}{\partial \beta_{0}}}$. Therefore increasing the weights of the risky loans will increase the losses at higher rate, as compared to the increase of losses resulting from increasing the weight of the safe asset.
\end{proof}

\begin{theorem}
\label{Theorem2.4}
Sensitivity of the solution of Model \ref{Model2.1} with $L$ (upper bound on the haircut).
\end{theorem}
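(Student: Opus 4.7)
The theorem asks for a sensitivity statement of the optimizer/optimal value of Model \ref{Model2.1} with respect to the haircut bound $L$. The plan is to first formalize ``sensitivity'' as monotonicity of the optimal value in $L$ together with an activity analysis of the haircut constraint (E), and then quantify the rate of change via the Lagrange multiplier associated with (E).

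First I would fix all other data $(\delta, k_{lev}, \theta_1, \gamma_i, \text{distributions of } X_i)$ and view the feasible set $S(L)$ as a set-valued map of $L$. Because only the single scalar inequality $\sum_i x_i \gamma_i \le L$ depends on $L$, it is immediate that $L_1 \le L_2$ implies $S(L_1)\subseteq S(L_2)$. Hence the optimal value $v(L)$ is non-decreasing in $L$, and any optimizer $(x^{*}(L_1),e^{*}(L_1))$ for $L_1$ remains feasible for $L_2$. Combined with Theorem \ref{Theorem2.1}, which guarantees existence, this already yields the qualitative sensitivity claim. I would also record the obvious saturation: since $\gamma_0 < \gamma_1 < \gamma_2$ and $\sum x_i = 1$, the haircut expression is bounded by $\gamma_2$, so for $L \ge \gamma_2$ the constraint (E) is redundant and $v(L)$ coincides with the value of Model \ref{Model2.2}, giving the ``upper plateau'' of the sensitivity curve.

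Next I would do the differential part. Writing the Lagrangian $\mathcal{L}(x,e,\lambda,\mu)$ with $\lambda \ge 0$ attached to (E), the KKT conditions together with the envelope theorem give, at a regular point, $\frac{dv}{dL} = \lambda^{*}(L) \ge 0$. Thus the sign and order of the sensitivity are governed entirely by whether (E) is binding: if $\sum_i x_i^{*}(L)\gamma_i < L$ strictly, then $\lambda^{*} = 0$ and $v$ is locally flat; if (E) binds, then $\lambda^{*}>0$ and relaxing $L$ strictly improves the objective. I would then read off the structural consequence: as $L$ grows, the binding haircut constraint forces the optimizer to shift mass from $L_2$ (high $\gamma_2$) toward $L_0$ (zero haircut), exactly reversing the intuition highlighted in Remark \ref{Remark2.1} about the trade-off between haircut cost and default risk.

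The main obstacle I expect is handling non-smoothness. The objective $\mathbb{E}[\max(X-(1-e),0)]$ is not differentiable everywhere and the capital-requirement constraint $e \ge \max(k_{lev}, K(x))$ involves a max, so the standard envelope/KKT machinery needs a Clarke-subdifferential or directional-derivative version. I would handle this by working with right/left derivatives of $v(L)$ and invoking a directional envelope theorem, which suffices for monotonicity plus one-sided rate bounds $D^{+}v(L), D^{-}v(L) \in [\underline{\lambda}, \overline{\lambda}]$ where $\underline{\lambda},\overline{\lambda}$ are the extreme KKT multipliers for (E). A secondary, milder obstacle is ensuring a constraint qualification at the optimum (so that KKT multipliers exist); a Slater-type argument using the feasible point $x=(1,0,0)$ with full equity, already exploited in the proof of Theorem \ref{Theorem2.1}, should suffice whenever $L>0$.
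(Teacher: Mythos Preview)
Your proposal is correct in spirit but takes a genuinely different, and considerably more sophisticated, route than the paper. The paper's argument is purely geometric and qualitative: it rewrites the boundary of constraint~(E) in intercept form, $\sum_i \beta_i/(L/\gamma_i)=1$, observes that each axis-intercept $L/\gamma_i$ shrinks as $L$ decreases, and then simply notes that if $(x^{*},e^{*})$ has haircut level $L^{*}=\sum_i x_i^{*}\gamma_i$, then the optimizer is unaffected until $L$ is lowered below $L^{*}$, at which point the solution must move and shed weight from the illiquid loans. There is no Lagrangian, no envelope theorem, and no discussion of the optimal value function $v(L)$ at all.

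Your plan instead builds a full parametric-optimization picture: monotonicity of $v(L)$ from nestedness of $S(L)$, saturation at $L\ge\gamma_2$ recovering Model~\ref{Model2.2}, and a quantitative rate $dv/dL=\lambda^{*}$ via KKT/envelope, with Clarke-type directional derivatives to cope with the nondifferentiable objective and the $\max$ in constraint~(C). This buys you an actual sensitivity \emph{rate} and a clean characterization of when (E) is active, neither of which the paper provides; the cost is the extra machinery (constraint qualification, one-sided envelope results) that the paper avoids entirely. Both arguments reach the same qualitative conclusion, but yours is strictly more informative.

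One small slip to fix: in your last paragraph of the differential part you write that ``as $L$ grows, the binding haircut constraint forces the optimizer to shift mass from $L_2$ \dots\ toward $L_0$.'' The direction is reversed: increasing $L$ relaxes (E) and allows \emph{more} weight on the high-haircut loan $L_2$; it is \emph{decreasing} $L$ that pushes mass toward $L_0$. This matches both your own monotonicity argument and the paper's conclusion that tightening $L$ ``excludes illiquid assets.''
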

\begin{proof}
Recall that, the constraint (E) of Model \ref{Model2.1} is given by, 
\[\sum\limits_{i=0}^{2}x_{i}\gamma_{i}\leq L,\]
where $L$ is the maximum limit on haircut. Accordingly, we consider the equation:
\[\sum\limits_{i=0}^{2} \frac{x_{i} \gamma_{i}}{L}=1 \Rightarrow \sum\limits_{i=0}^{2} \frac{x_{i}}{L/\gamma_{i}}=1.\]
It can be observed that, as $L$ decreases, $\displaystyle{\frac{L}{\gamma_{i}}}~(\gamma_{i} > 0)$ decreases. So the upper-bound of incorporating liquid investments reduces with the decrement of $L$. If the optimal portfolio for the investment is $\textbf{x}^{*}=({x}_{0}^{*},{x}_{1}^{*},{x}_{2}^{*})$, then its corresponding haircut is $L^{*}$. Consequently, a decrease in the upper bound of the constraint by an amount greater than or equal to $(L-L^{*})$ affects the solution, and the new solution of Model \ref{Model2.1} excludes illiquid assets.
\end{proof}

\begin{theorem}
\label{Theorem2.5}
Using Limited Liability in the model reduces the incorporation of risky loans in the portfolio.
\end{theorem}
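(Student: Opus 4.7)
The plan is to contrast Model \ref{Model2.1} with a natural ``no limited liability'' counterpart, in which the shareholders' payoff at $t=2$ is the raw net return $X - (1-e)$ rather than the option-like payoff $\max\bigl(X-(1-e),0\bigr)$. Denote by $V_{LL}(x,e)=\mathbb{E}\bigl[\max(X-(1-e),0)\bigr]-\delta e$ the limited-liability objective and by $V_{NL}(x,e)=\mathbb{E}[X]-(1-e)-\delta e$ the no-limited-liability objective, with the feasible region $S$ (constraints (A)--(E)) held fixed. Let $(x^{*}_{LL},e^{*}_{LL})$ and $(x^{*}_{NL},e^{*}_{NL})$ denote the corresponding optima; existence of the former is Theorem \ref{Theorem2.1}, and of the latter follows by the same Weierstrass argument.

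The first step is to write the KKT conditions for both problems. For the no-LL problem, the objective is linear in $x$ with gradient $\nabla_{x} V_{NL}=\bigl(\mathbb{E}[X_{0}],\mathbb{E}[X_{1}],\mathbb{E}[X_{2}]\bigr)$, so (since $r_{0}<r_{1}<r_{2}$) the shareholders are pushed toward the risky loans up to the risk ceiling $\rho(x)\le\theta_{1}$ and the haircut ceiling, making constraint (D) bind. For the LL problem, a subgradient calculation gives
\[
\frac{\partial V_{LL}}{\partial x_{i}}=\mathbb{E}\!\left[\mathbf{1}\{X\ge 1-e\}\,X_{i}\right]-\lambda_{K}\,\frac{\partial K(x)}{\partial x_{i}},
\]
where $\lambda_{K}\ge 0$ is the multiplier attached to the binding capital-requirement constraint $e\ge K(x)$. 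The key structural fact I would invoke is that $K(x)$ is increasing and convex in the weights of the risky loans $L_{1},L_{2}$, whereas $\partial K/\partial x_{0}=0$ for the safe loan.

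The second step is to compare the two marginals. Because $\mathbb{E}\bigl[\mathbf{1}\{X\ge 1-e\}X_{i}\bigr]\le \mathbb{E}[X_{i}]$, and because under LL the capital-requirement term effectively taxes each unit of risky loan at rate $\lambda_{K}\,\partial K/\partial x_{i}$ through the equity-cost channel $\delta e$, the marginal shadow value of increasing $x_{1}$ or $x_{2}$ is strictly smaller in the LL problem than in the no-LL problem, while the marginal value of $x_{0}$ is unchanged in this channel. A standard comparative-statics argument then shows that at the LL optimum the binding face of the feasible region must shift toward larger $x_{0}$ and smaller $x_{1},x_{2}$ relative to the no-LL optimum; I would make this precise by perturbing a candidate risky-heavy feasible direction and exhibiting a strict decrease of $V_{LL}$ wherever $V_{NL}$ would still be improving.

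The hard part will be handling the non-smoothness of $\max(X-(1-e),0)$ jointly in $x$ and $e$, and verifying that the capital-requirement constraint (C) is indeed binding at the LL optimum, which is what forces the ``equity-cost tax'' on risky loans to appear. If (C) were slack, the convexity of $V_{LL}$ in $X$ could in principle favour variance (the classical risk-shifting channel), so the proof must rule this out, for instance by showing that $\delta>1$ together with the monotonicity of $V_{LL}$ in $e$ in the relevant regime implies $e^{*}_{LL}=\max(k_{lev},K(x^{*}_{LL}))$. Once this is established, combining it with the marginal comparison above yields $x^{*}_{LL,1}\le x^{*}_{NL,1}$ and $x^{*}_{LL,2}\le x^{*}_{NL,2}$, with at least one inequality strict, which is the statement of the theorem.
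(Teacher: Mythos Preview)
The paper does not give a self-contained argument for this theorem; its entire proof is the sentence ``The proof has been presented in \cite{Barik2022}.'' So there is no in-paper argument to compare your proposal against. That said, your sketch has a genuine gap that you should be aware of.

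Your marginal comparison rests on the inequality $\mathbb{E}\bigl[\mathbf{1}\{X\ge 1-e\}\,X_{i}\bigr]\le \mathbb{E}[X_{i}]$, but this holds for \emph{every} asset, safe or risky, and therefore says nothing about the \emph{relative} attractiveness of risky versus safe loans under limited liability. What matters for the portfolio choice is the ratio (or difference) of marginals across assets, not their common scaling. In fact the classical risk-shifting mechanism points the other way: because $\max(\cdot,0)$ is convex, the option-like payoff makes variance more, not less, valuable to shareholders. You explicitly acknowledge this (``the convexity of $V_{LL}$ in $X$ could in principle favour variance''), but you never neutralise it; you only assert that binding capital requirements will dominate.

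The capital-requirement channel does not do the work you want either, at least not as you have set things up. You hold the feasible region (A)--(E) fixed across both problems, so the constraint $e\ge K(x)$ and the implied shadow cost $\lambda_{K}\,\partial K/\partial x_{i}$ on risky loans appears in the Lagrangian of the no-LL problem as well, just with a possibly different multiplier. To reach the conclusion you would need a quantitative argument that the LL multiplier is strictly larger by enough to more than offset the convexity/risk-shifting term, and your sketch does not supply this. As it stands, the two forces you have identified do not combine in the direction the theorem requires; presumably the missing comparison is exactly what \cite{Barik2022} provides.
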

\begin{proof}
The proof has been presented in \cite{Barik2022}.
\end{proof}

\begin{theorem}
\label{Theorem2.6}
Importance of constraint $(C)$ in Model \ref{Model2.3}.
\end{theorem}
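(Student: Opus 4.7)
The plan is to formalize the intuition of Remark \ref{Remark2.1} by comparing the optimal liquidation strategies of Model \ref{Model2.3} (with the lower bound on expected loss of the liquidated portfolio) and Model \ref{Model2.4} (without it), and showing that dropping constraint (C) drives the liquidation toward the safest asset, thereby leaving a strictly riskier residual portfolio at time $t=2$.

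First I would analyze Model \ref{Model2.4}. Since the objective $\sum_i \beta_i \gamma_i$ is linear and the only binding constraint apart from the box constraints is the linear withdrawal equality $\sum_i \beta_i (1-\gamma_i) X_i^{(1)}=(\alpha_w-\alpha_d)d$, this is a linear program. Using the ordering $\gamma_0<\gamma_1<\gamma_2$ together with Theorem \ref{Theorem2.3}, one checks that the ``cost per unit of cash raised'' $\gamma_i/\left[(1-\gamma_i)X_i^{(1)}\right]$ is minimized at $i=0$. Hence any optimal solution of Model \ref{Model2.4} satisfies $\beta_0^{\circ}=\min\bigl(x_0,\tfrac{(\alpha_w-\alpha_d)d}{(1-\gamma_0)X_0^{(1)}}\bigr)$, and the riskier assets are liquidated only if $L_0$ alone cannot cover the withdrawal. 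In particular, whenever the safe holding is large enough, one gets $\beta_1^{\circ}=\beta_2^{\circ}=0$, so the entire post-liquidation portfolio is concentrated in $L_1$ and $L_2$.

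Next I would turn to Model \ref{Model2.3} and argue that imposing $\sum_i \beta_i \mathrm{EL}_i \geq \theta_2$ with $\mathrm{EL}_0 \approx 0 < \mathrm{EL}_1 < \mathrm{EL}_2$ renders the previous solution $\beta^{\circ}$ infeasible as soon as $\theta_2$ exceeds $\beta_0^{\circ}\mathrm{EL}_0$. Writing the KKT conditions of Model \ref{Model2.3}, the multiplier $\mu\ge 0$ attached to constraint (C) shifts the reduced cost of asset $i$ from $\gamma_i$ to $\gamma_i-\mu\,\mathrm{EL}_i$, which makes the riskier loans relatively more attractive to liquidate; consequently any optimal $\beta^{*}$ of Model \ref{Model2.3} satisfies $\beta_1^{*}+\beta_2^{*}\ge \beta_1^{\circ}+\beta_2^{\circ}$, with strict inequality whenever (C) is active. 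This is the step I expect to be the main obstacle: one must rule out pathological cases where increasing $\beta_1$ or $\beta_2$ forces $\beta_0$ down so much that feasibility of the withdrawal equation is lost, which requires a careful case split on whether the upper box constraints $\beta_i\le x_i$ bind.

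Finally, I would interpret this comparison in terms of the residual portfolio at $t=2$: the surviving holdings are $(x_i-\beta_i)$, so a larger $\beta_1^{*}+\beta_2^{*}$ under Model \ref{Model2.3} means a smaller exposure to $L_1,L_2$ at $t=2$, hence a smaller expected loss and a lower probability of insolvency relative to the Model \ref{Model2.4} solution. This formalizes ``the importance of constraint (C)'' as exactly the mechanism invoked in Remark \ref{Remark2.1}: without (C) the bank would cash out its liquidity buffer $L_0$ and be left dangerously exposed to the risky loans, whereas (C) forces the liquidation mix to include a controlled share of $L_1$ and $L_2$, trading a modestly higher haircut at $t=1$ for substantially reduced default risk at $t=2$.
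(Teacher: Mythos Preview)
Your proposal is correct in spirit and reaches the same qualitative conclusion as the paper, but the route differs in two respects worth noting.

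First, on technique: you exploit the linear-programming structure directly, ranking assets by the cost-per-unit-of-cash ratio $\gamma_i/\bigl[(1-\gamma_i)X_i^{(1)}\bigr]$ and then reading off the effect of constraint~(C) from the KKT system (the multiplier $\mu$ shifting reduced costs to $\gamma_i-\mu\,\mathrm{EL}_i$). The paper instead argues geometrically, rewriting constraint~(C) in intercept form $\sum_i \beta_i/(\theta_2/\mathrm{EL}_i)=1$ to see that raising $\theta_2$ pushes up the lower bound on risky-asset liquidation, and then implicitly differentiates the withdrawal equality~(B) with respect to $\beta_0$ to obtain $\partial(\beta_1+\beta_2)/\partial\beta_0<0$. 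Your LP/KKT argument is tighter and handles the comparative statics more cleanly; the paper's intercept-plus-differentiation argument is more elementary but also more heuristic.

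Second, on the endpoint: you close by comparing the \emph{expected loss} of the residual portfolio at $t=2$, whereas the paper closes with a \emph{worst-case} comparison. Denoting by $\beta^{(3)}$ and $\beta^{(4)}$ the optima of Models~\ref{Model2.3} and~\ref{Model2.4}, the paper writes down the explicit inequality
\[
\bigl(\beta_0^{(4)}-\beta_0^{(3)}\bigr)X_0^{(2)} \;\ge\; \bigl(\beta_1^{(3)}-\beta_1^{(4)}\bigr)(1-\eta_1)X_1^{(0)}+\bigl(\beta_2^{(3)}-\beta_2^{(4)}\bigr)(1-\eta_2)X_2^{(0)},
\]
which says that, when both risky loans default, the Model~\ref{Model2.3} liquidation leaves a better realization; it then observes that $\theta_2$ can be tuned so that this holds. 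Your expected-loss comparison is arguably the more natural metric, but you do not recover this calibration step for $\theta_2$, which is the one quantitative statement the paper actually makes.
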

\begin{proof}
The constraint is given by, 
\[ \sum_{i=0}^{2} \beta_{i} \text{EL}_{i} \geq \theta_{2}.\]
As we have already mentioned (in the proof of Theorem \ref{Theorem2.3}), we have $\displaystyle{\frac{\partial L}{\partial \beta_{2}} \geq \frac{\partial L}{\partial \beta_{1}} \geq \frac{\partial L}{\partial \beta_{0}}}$. Therefore liquidating the most liquid asset causes the safe (risk-free) asset less haircut. Nevertheless, in the event of bankruptcy, the bank has to face an adverse scenario due to liquidation safe assets. Therefore, if we add this constraint in the model, with the risk bound $\theta_{2}$ (which is less than $\theta_{1}$, because the portfolio has risk measure less than  $\theta_{1}$ by Model \ref{Model2.1}), then a suitable threshold of this $\theta_{2}$ will restrict the liquidation of safe assets and make it incumbent on the bank to liquidate the illiquid assets (risky loans). The mathematical justification for the preceding result is as follows. We first consider the plane given by the following equation:
\[\sum\limits_{i=0}^{2}\beta_{i}EL_{i}=\theta_{2} \Rightarrow \sum\limits_{i=0}^{2} \frac{\beta_{i} EL_{i}}{\theta_{2}}=1 \Rightarrow \sum\limits_{i=0}^{2}\frac{\beta_{i}}{\left(\frac{\theta_{2}}{EL_{i}}\right)}=1.\]
We observe that the upper half of this plane contains the feasible region for Model \ref{Model2.3}. As the $\theta_{2}$ increases, $\displaystyle{\frac{\theta_{2}}{EL_{i}}}$ also increases, provided $EL_{i}>0$. Therefore the lower bound for liquidating the risky assets increases, and hence the bank has to liquidate the more risky asset for higher value of $\theta_{2}$.

From the constraint $(B)$ in Model \ref{Model2.3}, it is obvious that decreasing $\beta_{0}$ results in increase of $\beta_{1}$ and $\beta_{2}$. Differentiating the constraint $(B)$ with respect to $\beta_{0}$ we get,
\[\left(1-\gamma_{0}\right)X_{0}^{(1)}=-\frac{\partial}{\partial \beta_{0}} \left(\beta_{1}(1-\gamma_{1})X_{1}^{(1)}+\beta_{2}(1-\gamma_{2})X_{2}^{(1)}\right).\]
Let $b=\min\left(\left(1-\gamma_{1}\right)X_{1}^{(1)},\left(1-\gamma_{1}\right)X_{1}^{(1)}\right)$. Since the realizations are all positive, hence $b > 0$. Therefore, we have, 
\[\frac{\partial}{\partial \beta_{0}}\left(\beta_{1}b+\beta_{2}b\right) < 0\Rightarrow \frac{\partial}{\partial \beta_{0}}\left(\beta_{1}+\beta_{2}\right) < 0.\]
Now, we consider the worst case, in which both the risk loans have been defaulted. Let $ \beta^{(3)}=\left(\beta_{0}^{(3)},\beta_{1}^{(3)},\beta_{2}^{(3)}\right)$ be the solution of Model \ref{Model2.3} and $ \beta^{(4)}=\left(\beta_{0}^{(4)},\beta_{1}^{(4)},\beta_{2}^{(4)}\right)$ be the solution of Model \ref{Model2.4} (without this risk constraint). Therefore $\displaystyle{\beta_{0}^{(4)} \geq \beta_{0}^{(3)}}$. Now, in the worst case scenario, the realization for the solution of Model \ref{Model2.3} is better than the realization for the solution of Model \ref{Model2.4}, provided:
\[\left(\beta_{0}^{(4)}-\beta_{0}^{(3)}\right)X_{0}^{2} \geq \left(\beta_{1}^{(3)}-\beta_{1}^{(4)}\right)(1-\eta_{1})X_{1}^{0}+ \left(\beta_{2}^{(3)}-\beta_{2}^{(4)}\right)(1-\eta_{2})X_{2}^{0},\]
the LGD for the $i$-th loan is $\eta_{i},~i=0,1,2$. Finally, we can reduce $\theta_{2}$ so that the above inequality will be satisfied.
\end{proof}

\section{An Example}

In this Section, we construct an example in order to illustrate the theoretical results presented in the preceding Section. For this purpose, we consider three loans, comprising of one safe investment and two risky investments, with one of them being riskier than the other. In this context, the term riskiness refers to the credit worthiness of the debtor, which is contingent on parametric values such as PD and LGD, some of which (the parameter values) are available in \cite{Kiema2014, Shi2016}. For the purpose of our illustrative example, we have taken the values for return, PD and LGD to be the same as in \cite{Barik2022}. Further, the values of the haircuts of different types of loans are listed in the last column of the Table \ref{Tab:2.3}. It may be noted that the authors in \cite{Van2017} have discussed in detail about the haircuts applicable for various class of assets. Finally we take $\delta=1.04$, which was motivated from \cite{Kiema2014}. In summary, Table \ref{Tab:2.3} enumerates all the details of these parameter values pertaining to the loans.

We have taken a three-loan portfolio to illustrate the theoretical results obtained. One of these three loans is safe, and the others are risky. All the necessary parameter values are declared in Table \ref{Tab:2.3} with leverage ratio, $k_{lev}=4\%$.
\begin{table}[h]
\centering 
\begin{tabular}{|c|c|c|c|c|}
\hline
Loan Type & Return & PD & LGD & Haircut \\
\hline
Safe Loan & $r_{rf}=3\%$ & $0$ & $0$ & $0\%$\\
\hline
Less Risky Loan & $r_{s}=9\%$ & $p_{s}=6.1\%$ & $lgd_{s}=10\%$ & $10\%$ \\
\hline
More Risky Loan & $r_{r}=13.2\%$ & $p_{r}=12.2\%$ & $lgd_r=9\%$ & $20\%$\\
\hline
\end{tabular}
\caption{Risk parameters for the three loans}
\label{Tab:2.3}
\end{table}

We begin with the solution for Model \ref{Model2.1}, which with the inclusion of the parameter values in Table \ref{Tab:2.3}, reduces to the model:
\[\max_{x,e}\left[\mathbb{E}\left[\max\left(X-(1-e),0\right)\right]-1.04 e\right],\]
subject to the constraints of:
\begin{enumerate}[(A)]
\item $0\leq x_{i}\leq 1~\forall~i=0,1,2$.
\item $\displaystyle{\sum\limits_{i=0}^{2} x_{i}=1}$.
\item $e\geq\max\left(0.04, K(x)\right)$.
\item $x_{0}\times 0+x_{1}\times 0.0061+x_{2}\times 0.01098\left(\text{Expected Loss}\right)\leq 0.012 \left(=\theta_{1} \right) $ 
\item $\displaystyle{x_{0}\times 0+x_{1} \times 0.1+x_{2} \times 0.2\leq 0.15\left(=L\right)}$
\end{enumerate}
Further, in an analogous manner, Model \ref{Model2.2}, which does not include a cap on the haircut becomes:
\[\max_{x,e}\left[\mathbb{E}\left[\max\left(X-(1-e),0\right)\right]-1.04 e\right],\]
subject to the constraints of:
\begin{enumerate}[(A)]
\item $0\leq x_{i} \leq 1~\forall~i=0,1,2$.
\item $\displaystyle{\sum\limits_{i=0}^{2} x_{i}=1}$.
\item $e \geq \max\left(0.04, K(x)\right)$.
\item $x_{0}\times 0+x_{1} \times 0.0061+x_{2}\times 0.01098\left(\text{Expected Loss}\right) \leq 0.012 \left(=\theta_{1}\right)$. 
\end{enumerate}
 
Model \ref{Model2.1} and Model \ref{Model2.2} are continuous, non-differentiable problems (as the objective function is not differentiable). However, it can be transformed into a differentiable optimization problem, which has been elaborately discussed in \cite{Barik2022}. After solving the problem, we get the portfolio $\left(2.91\%,44.18\%,52.91\%\right)$ (that is, investments of $2.91\%$ in $L_{0}$, $44.18\%$ in $L_{1}$ and $52.91\%$ in $L_{2}$) from Model \ref{Model2.1} and for Model \ref{Model2.2} we get $\left(0\%,27.78\%,72.22\%\right)$, with leverage ratio of $4\%$, for both the portfolios. The results show that the inclusion of the cap on the haircut reduces the incorporation of illiquid assets in its portfolio. Hence Model \ref{Model2.1} outperforms \ref{Model2.2} from the perspective of liquidity risk management \footnote{We have used the ``scipy'' package of Python to solve these models}.

Next, we obtain the solution to the problem of liquidating assets, at time $t=1$, to meet the claims of the depositors, both with and without a lower bound on the risk. For this purpose, we need the solution of investment decision at $t=0$, which in turn plays the role of the first constraint. We first proceed with the solution of Model \ref{Model2.1}, that is, $\left(2.91\%,44.18\%,52.91\%\right)$. Here, in order to solve for the model, we have taken $\alpha_{w}=10\%$ and $\alpha_{d}=0$. In other words, $10\%$ of the deposits are withdrawn and there is no further investment. Hence, Model \ref{Model2.3} becomes:
\[\min_{\beta}\left[\beta_{0}\times 0+\beta_{1}\times 0.1+\beta_{2}\times0.2\right],\]
subject to the constraints:
\begin{enumerate}[(A)]
\item $0 \leq \beta_{i} \leq x_{i},~i=0,1,2,~\left(x_{0}=0.0291,x_{1}=0.4418,x_{2}=0.5291\right)$. 
\item $\beta_{0}(1-0)X_{0}^{(1)}+\beta_{1}(1-0.1)X_{1}^{(1)}+\beta_{2}(1-0.2)X_{2}^{(1)}=0.1\times0.96$.
\item $\displaystyle{\beta_{0}\times0+\beta_{1}\times 0.0061+\beta_{2}\times 0.01098\geq \theta_{2}}$
\end{enumerate}
We solve this problem with two different values of $\theta_{2}$, first for $\theta_{2}=0.05\%$ and then for $\theta_{2}=0.1\%$. Before going to the solution of Model \ref{Model2.3}, we solve Model \ref{Model2.4}, which gives the liquidation strategy without lower bound on risk. With all these parameter values, the problem becomes: 
\[\min_{\beta}\left[\beta_{0}\times0+\beta_{1}\times0.1+\beta_{2}\times0.2\right],\]
subject to the constraints:
\begin{enumerate}[(A)]
\item $0 \leq \beta_{i} \leq x_{i},~i=0,1,2,~\left(x_{0}=0.0291,x_{1}=0.4418,x_{2}=0.5291\right)$. 
\item $\beta_{0}(1-0)X_{0}^{(1)}+\beta_{1}(1-0.1)X_{1}^{(1)}+\beta_{2}(1-0.2)X_{2}^{(1)}=\left(0.1\right)\times0.96$.
\end{enumerate}
Solving Model \ref{Model2.4} we get $\left(2.91\%,7.07\%,0\%\right)$, that is, liquidation happens for $L_{0}$ and $L_{1}$. Model \ref{Model2.3}, with $\theta_{2}=0.05\%$ gives the liquidation portfolio of $\left(1.86\%,8.20\%,0\%\right)$ and for $\theta_{2}=0.1\%$ we get $\left(0\%,3.93\%,6.93\%\right)$. This supports the result established in Theorem \ref{Theorem2.6}, \textit{i.e.,} a lower bound on risk for the liquidating portfolio reduces the risk for the remaining assets, and as a result of which the chance of default is also reduced. Therefore the solution from Model \ref{Model2.3} performs better than Model \ref{Model2.4}, in the worst case scenario.

Next, we take the solution of Model \ref{Model2.2} for solving the liquidation portfolio. The formulation of the liquidation models is the same as discussed above, except for the values of $(x_{i}:i=0,1,2)$. Solving Model \ref{Model2.4} (without lower bound on the risk) gives the portfolio of $\left(0\%,10.21\%,0\%\right)$, that is, the liquidation of $10.21\%$ of less risky asset $(L_{1})$. We solve Model \ref{Model2.3}, first with a risk-lower bound of $r=0.05\%$ and then for $r=0.1\%$, resulting in the liquidation strategies of $\left(0\%,10.21\%,0\%\right)$ and $\left(0\%,3.93\%,6.93\%\right)$, respectively. Hence the model with a higher risk-lower bound reduces the chance of default at the final time, as was stated in Theorem \ref{Theorem2.6}. Now with both the inputs, it is clear that Model \ref{Model2.3} outperforms Model \ref{Model2.4} from a risk management perspective.
We have taken two inputs to solve these liquidation strategies, namely, the solution of Model \ref{Model2.1} and the solution of Model \ref{Model2.2}. From the numerical results, it is clear that, as an input, the solution of Model \ref{Model2.1} plays a better role, since it produces less probability of default at the final time, after liquidation in the intermediate time. Therefore, in summary, solving the investment decision with Model \ref{Model2.1} outperforms Model \ref{Model2.2} and in case of liquidation, Model \ref{Model2.3} is more advantageous than Model \ref{Model2.4}, which is evident from Tables \ref{Tab:2.5} and \ref{Tab:2.6}

\begin{table}[h!]
\centering
\begin{tabular}{|c|c|c|c|} 
\hline
Model & Model used as input & Portfolio with $\theta_{2}=0.05\%$ & Portfolio with $\theta_{2}=0.1\%$ \\ [0.5ex] 
\hline
Model 3 & Model 1 & $(1.86\%, 8.20\%, 0\%)$ & $(0\%, 3.93\%, 6.93\%)$ \\
Model 3 & Model 2 & $(0\%, 10.21\%, 0\%)$ & $(0\%, 3.93\%, 6.93\%)$ \\
\hline
\end{tabular}
\caption{Results for Model \ref{Model2.3}}
\label{Tab:2.5}
\end{table}

\begin{table}[h!]
\centering
\begin{tabular}{|c|c|c|} 
\hline
Model & Model which is used as input & Portfolio \\ [0.5ex] 
\hline
Model 4 & Model 1 & $(2.91\%, 7.07\%, 0\%)$  \\
Model 4 & Model 2 & $(0\%, 10.21\%, 0\%)$ \\
\hline
\end{tabular}
\caption{Results for Model \ref{Model2.4}}
\label{Tab:2.6}
\end{table}

Next, taking the solution of Model \ref{Model2.1} as the first constraint of Models \ref{Model2.3} and \ref{Model2.4}, we proceed to show the feasible region for the Problems \ref{Model2.3} and \ref{Model2.4}, respectively. We plotted three constraints of the Model \ref{Model2.3} in three different figures. The first constraint is presented in Subfigure \ref{Fig:2_FR_First}. The interior of the cube represents the set of admissible solutions. Subfigure \ref{Fig:2_FR_Second} gives the second constraint. The points on the surfaces are the feasible points, implying that the bank will liquidate its assets equivalent to the amount to be withdrawn by the depositors. Finally, Subfigure \ref{Fig:2_FR_Third} shows the third constraint. The surface presents the lower bound on risk. This model takes the upper half of this surface for an admissible liquidating strategy. Thus the intersection of these three is the feasible region for Model \ref{Model2.3}. Then we present the feasible area of Model \ref{Model2.4} in Figure \ref{Fig:2_FR_M4}. It excludes the lower bound on risk. All the other descriptions are the same as the previous scenario. The intersection of these regions is the feasible region for Model \ref{Model2.4}. The constraint (E) in Model \ref{Model2.1} is presented in Figure \ref{FIG:2_Haircut}. Here, $A_{1}$ represents an upper bound of $10\%$ in haircut, that is, the lower half of this surface is feasible whereas $A_{2}$ represents the $15\%$ upper bound of the haircut. Finally, Figures the $x$, $y$ and $z$-axes represent $L_{0}$, $L_{1}$ and $L_{2}$, respectively.  
\begin{figure}
\begin{subfigure}{.6\textwidth}
\includegraphics[width=.7\linewidth]{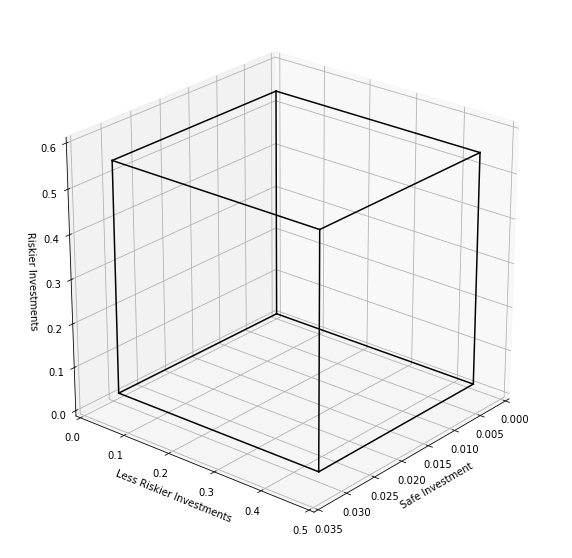}  
\caption{First Constraint}
\label{Fig:2_FR_First}
\end{subfigure}
\begin{subfigure}{.6\textwidth}
\includegraphics[width=.7\linewidth]{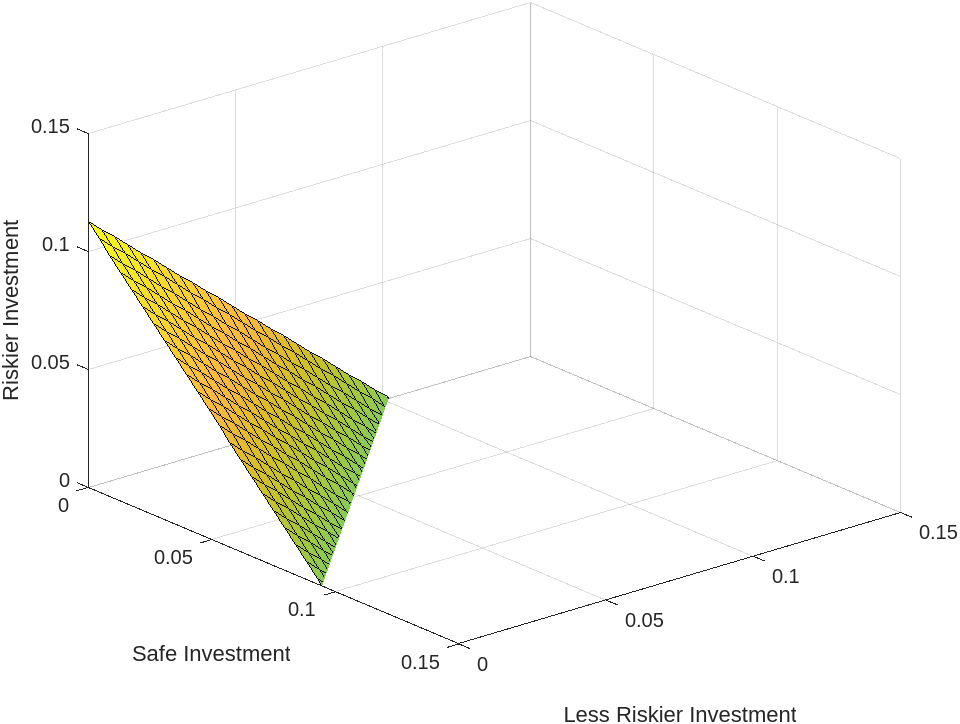}  
\caption{Second Constraint}
\label{Fig:2_FR_Second}
\end{subfigure}
\begin{subfigure}{.6\textwidth}
\hspace{3.5cm}

\includegraphics[width=.7\linewidth]{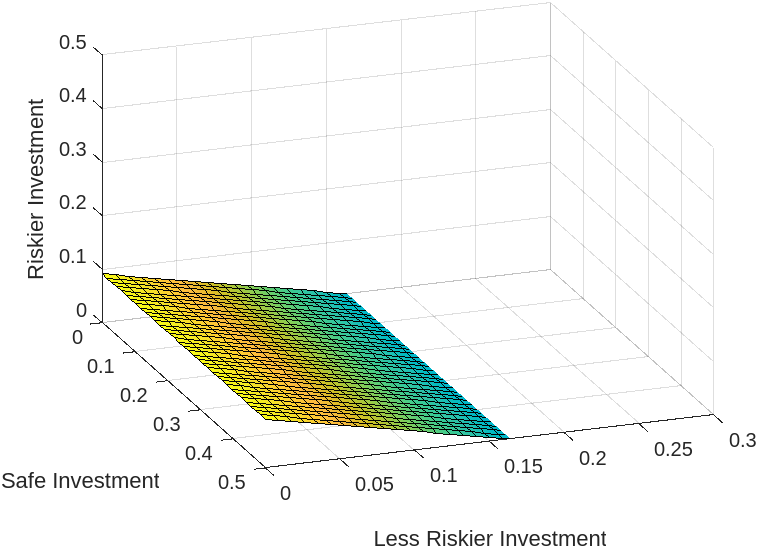}  
\caption{Third Constraint}
\label{Fig:2_FR_Third}
\end{subfigure}
\caption{The feasible region for Model \ref{Model2.3}}
\label{Fig:2_FR_M3}
\end{figure}

\begin{figure}
\begin{subfigure}{.6\textwidth}
\includegraphics[width=.8\linewidth]{Two_M3_One.png}  
\caption{First Constraint}
\label{first}
\end{subfigure}
\begin{subfigure}{.6\textwidth}
\includegraphics[width=.8\linewidth]{Two_M3_Two.png}  
\caption{Second Constraint}
\label{second}
\end{subfigure}
\caption{The feasible region for Model \ref{Model2.4}}
\label{Fig:2_FR_M4}
\end{figure}

\begin{center}
\begin{figure}
\includegraphics[width=0.7\linewidth]{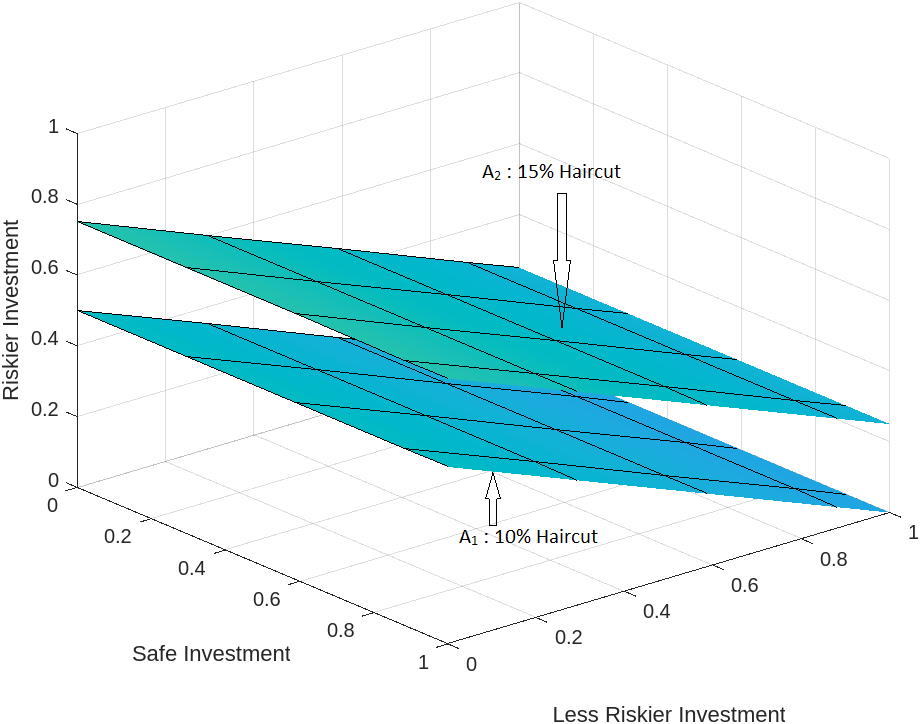}
\caption{Upper Bound on Haircut}
\label{FIG:2_Haircut}
\end{figure}
\end{center}

\section{Conclusions}

Liquidity is one of the major concerns in today's bank's portfolio decision, especially in the context of maintaining solvency and avoiding a bank-run. The inclusion of illiquid assets in the portfolio of bank, increases the likelihood of it being unable to meet its obligations to the depositors. On the other banks, by virtue of their very structure and design enjoy protections of limited liability. Accordingly, this paper presents a novel approach of incorporating the including liquidity constraint, while enjoying limited liability protection, in case of the loan portfolio held by banks. We also examine the benefits of applying a constraint on the haircut.

In this work, we have shown a three-time step model that optimizes profit with limited liability at the initial time point of $t=0$. Here, we have established the comparison between the two models, namely, Model \ref{Model2.1}) (with a cap on the haircut) and Model \ref{Model2.2} (without a cap on the haircut), with the other constraints remaining identical. At $t=1$, a fraction of depositors claim their money back. Therefore the model solves this problem of liquidating the fraction of assets. Here we compare Model \ref{Model2.3}(which includes a risk-lower bound) and Model \ref{Model2.4} (which excludes a risk-lower bound), with all other constraints being identical. Finally, at $t=2$, the bank either faces bankruptcy or makes a profit after paying all the liabilities. 

Moreover, our analysis shows that in the first comparison, the model with a cap on the total haircut includes less illiquid assets in the portfolio, as compared to the model without a cap on the total haircut. On the other hand, the inclusion of limited liability produces a less risky portfolio, which fits better into the actual scenario. In the case of the second comparison, a lower bound on the risk resists the liquidation of safe loans, as compared to the the model without a lower bound on the risk. 

Therefore, from a practitioner's point of view, the model with the cap on the haircut and limited liability can help to make the right investment decision because it reflects a more accurate scenario and produces better results. Incorporation of limited liability reduces risk and upper bound on haircut helps to survive in the stress scenario. In the intermediate time step model, the liquidation strategy minimizes the haircut. It resists the liquidation of safe assets, which increases the chance of default. Since it produces better results by satisfying all the investment criteria, therefore it is undoubtedly a more valuable model for decision-making. 

\bibliographystyle{elsarticle-num}

\bibliography{BIBLIO}

\begin{thebibliography}{10}
\expandafter\ifx\csname url\endcsname\relax
  \def\url#1{\texttt{#1}}\fi
\expandafter\ifx\csname urlprefix\endcsname\relax\def\urlprefix{URL }\fi
\expandafter\ifx\csname href\endcsname\relax
  \def\href#1#2{#2} \def\path#1{#1}\fi

\bibitem{Blum2008}
J.~M. Blum, Why ‘{B}asel {II}’may need a leverage ratio restriction,
  Journal of Banking \& Finance 32~(8) (2008) 1699--1707.

\bibitem{D2009}
K.~D'Hulster, The leverage ratio: A new binding limit on banks (2009).

\bibitem{Acosta2020}
J.~Acosta-Smith, M.~Grill, J.~H. Lang, The leverage ratio, risk-taking and bank
  stability, Journal of Financial Stability (2020) 100833.

\bibitem{Hildebrand2008}
P.~M. Hildebrand, Is Basel II enough?: The benefits of a leverage ratio, SNB,
  2008.

\bibitem{Dellariccia2014}
G.~Dell'Ariccia, L.~Laeven, R.~Marquez, Real interest rates, leverage, and bank
  risk-taking, Journal of Economic Theory 149 (2014) 65--99.

\bibitem{Kiema2014}
I.~Kiema, E.~Jokivuolle, Does a leverage ratio requirement increase bank
  stability?, Journal of Banking \& Finance 39 (2014) 240--254.

\bibitem{Wagner2007}
W.~Wagner, The liquidity of bank assets and banking stability, Journal of
  Banking \& Finance 31~(1) (2007) 121--139.

\bibitem{Khan2017}
M.~S. Khan, H.~Scheule, E.~Wu, Funding liquidity and bank risk taking, Journal
  of Banking \& Finance 82 (2017) 203--216.

\bibitem{Acharya2012}
V.~Acharya, H.~Naqvi, The seeds of a crisis: A theory of bank liquidity and
  risk taking over the business cycle, Journal of Financial Economics 106~(2)
  (2012) 349--366.

\bibitem{Ghenimi2017}
A.~Ghenimi, H.~Chaibi, M.~A.~B. Omri, The effects of liquidity risk and credit
  risk on bank stability: Evidence from the mena region, Borsa Istanbul Review
  17~(4) (2017) 238--248.

\bibitem{Van2017}
G.~J. Van~Schalkwyk, P.~J. Witbooi, An optimal strategy for liquidity
  management in banking, Applied Mathematical Science 2 (2017) 275--297.

\bibitem{Sinn2001}
H.-W. Sinn, Risk taking, limited liability and the competition of bank
  regulators (2001).

\bibitem{Carney1998}
W.~J. Carney, Limited liability, Encyclopedia of law and economics (1998).

\bibitem{Cordella2003}
T.~Cordella, E.~L. Yeyati, Bank bailouts: moral hazard vs. value effect,
  Journal of Financial intermediation 12~(4) (2003) 300--330.

\bibitem{Mencia2012}
J.~Menc{\'\i}a, Assessing the risk-return trade-off in loan portfolios, Journal
  of Banking \& Finance 36~(6) (2012) 1665--1677.

\bibitem{Barik2022}
D.~N. Barik, S.~P. Chakrabarty, Does limited liability reduce leveraged risk?:
  The case of loan portfolio management, Journal of Risk and Financial
  Management 15~(11) (2022) 519.

\bibitem{Merton74}
R.~C. Merton, On the pricing of corporate debt: {T}he risk structure of
  interest rates, The Journal of Finance 29~(2) (1974) 449--470.

\bibitem{Shi2016}
B.~Shi, B.~Meng, J.~Wang, An optimal decision assessment model based on the
  acceptable maximum lgd of commercial banks and its application, Scientific
  Programming 2016 (2016).

\end{thebibliography}

\end{document}